\DeclareSymbolFont{symbolsC}{U}{pxsyc}{m}{n}
\DeclareMathSymbol{\medcirc}{\mathbin}{symbolsC}{7}
\algnewcommand\algorithmicinput{\textbf{Input:}}
\algnewcommand\Input{\item[\algorithmicinput]}
\algnewcommand\algorithmicoutput{\textbf{Output:}}
\algnewcommand\Output{\item[\algorithmicoutput]}
\algnewcommand{\Initialize}[1]{%
	\State \textbf{Initialize:}
	\Statex \hspace*{\algorithmicindent}\parbox[t]{.8\linewidth}{\raggedright #1}
}
\newtheorem{theorem}{Theorem}[section]
\newtheorem{lemma}[theorem]{Lemma}
\newtheorem{problem}[theorem]{Problem}
\newtheorem{corollary}[theorem]{Corollary}
\newtheorem{definition}[theorem]{Definition}
\newtheorem{example}[theorem]{Example}
\newtheorem{remark}[theorem]{Remark}
\newtheorem{assumption}[theorem]{Assumption}
\numberwithin{equation}{section}
\newcommand{\ie}{{\it i.e.}}
\DeclareMathOperator{\diff}{d}
\title{\LARGE \bf
	Verification of Switched Stochastic Systems via Barrier Certificates*
}
\author{Mahathi Anand$^{\dagger}$, Pushpak Jagtap$^{\dagger}$ and Majid Zamani
	\thanks{*This work was supported in part by the H2020 ERC
		Starting Grant AutoCPS (grant agreement No. 804639), the German Research Foundation (DFG) through the grant ZA 873/1-1 and Research Training Group 2428, and the TUM International Graduate School of Science and Engineering (IGSSE).}
	\thanks{$^{\dagger}$Authors contributed equally to this work.}
	\thanks{M. Anand is with the Computer Science Department, Ludwig Maximilian University of Munich, Germany. P. Jagtap is with the Department of Electrical and Computer Engineering, Technical University of Munich, Germany. M. Zamani is with the Computer Science Department, University of Colorado Boulder, USA. M. Zamani is with the Computer Science Department, Ludwig Maximilian University of Munich, Germany. Emails:  {\tt\small niloofar.jahanshahi@lmu.de}, {\tt\small pushpak.jagtap@tum.de}, {\tt\small majid.zamani@colorado.edu}.}%
}
\begin{document}

	\maketitle
	\thispagestyle{empty}
	\pagestyle{empty}

	\begin{abstract}
		
		The paper presents a methodology for temporal logic verification of continuous-time switched stochastic systems. Our goal is to find the lower bound on the probability that a complex temporal property is satisfied over a finite time horizon. The required temporal properties of the system are expressed using a fragment of linear temporal logic, called safe-LTL with respect to finite traces. Our approach combines automata-based verification and the use of barrier certificates. It relies on decomposing the automaton associated with the negation of specification into a sequence of simpler reachability tasks and compute upper bounds for these reachability probabilities by means of common or multiple barrier certificates. Theoretical results are illustrated by applying a counter-example guided inductive synthesis framework to find barrier certificates.
		
	\end{abstract}

	\section{Introduction}
	Formal verification of dynamical systems against complex specifications has attracted significant attention in the past few years \cite{Tabuada.2009}. The verification problem becomes very challenging for the continuous-time continuous-space dynamical systems with noise and discrete dynamics. There are few results available on verification of continuous-time stochastic hybrid systems utilizing discrete approximations. Examples include probabilistic verification based on a discrete approximation for safety and reachability \cite{Koutsoukos.2008}, verification of stochastic hybrid systems described as piece-wise deterministic Markov processes \cite{Bujorianu.2003}, and safety verification of stochastic systems with state-dependent switching \cite{Prandini.2007}. However, these abstraction techniques are based on state set discretization and face the issue of discrete state set explosion.

	On the other hand, a discretization-free approach, based on barrier certificates, has been used for verifying stochastic hybrid systems against invariance property. Authors in \cite{4287147} used barrier certificate for safety verification of stochastic systems with probabilistic switching. Similar results are reported in \cite{Wisniewski.2018} for switched diffusion processes and piece-wise deterministic Markov processes. These results provide infinite time horizon guarantees. However, they require that barrier certificates exhibit a supermartingale property which presupposes stochastic stability and vanishing noise at the equilibrium point. 
	
	Our previous work \cite{Jagtap.2018} presents the idea of combining automata representation of a specification and barrier certificates, for the formal verification of discrete-time stochastic systems without requiring any stability assumption on the dynamics of the system. There, we only require \textit{$c$-martingale} property which can be fulfilled by unstable stochastic systems as well. The current manuscript follows the same direction to solve the problem of formal verification of continuous-time switched stochastic systems. 
	
	To the best of our knowledge, this paper is the first to use barrier certificates for the verification of continuous-time switched stochastic systems against a wide class of temporal logic properties. Our main contribution is to provide a systematic approach for computing lower bounds on the probability that a given switched stochastic system satisfies a fragment of linear temporal logic specifications, called safe-LTL, over finite time horizon. This is achieved by first decomposing the given specification into a sequence of simpler verification tasks based on the structure of the automaton corresponding to the negation of the specification. Then we use barrier certificates for computing probability bounds for these simple verification tasks which are then combined to get a (potentially conservative) lower bound on the probability of satisfying the original specification. We provide those probability bounds using common barrier certificates for arbitrary switching and using multiple barrier certificates for some probabilistic switching. The theoretical results are illustrated with the help of a numerical example.
	
	\section{Preliminaries}
	\subsection{Notations}
	We denote the set of real, positive real, nonnegative real, and positive integer numbers by $\mathbb{R}$, $\mathbb{R}^+$, $\mathbb{R}_0^+ $, and $\mathbb{N}$, respectively. We use $\mathbb{R}^n$ to denote an $n$-dimensional Euclidean space and $\mathbb{R}^{n \times m}$ to denote a space of real matrices with $n$ rows and $m$ columns. 
	Given a matrix $A\in\mathbb{R}^{n\times n}$, $\text{Tr}(A)$ represents trace of $A$ which is the sum of all diagonal elements of $A$.  
	Int($X$) represents interior of set $X$.
	\subsection{Switched Stochastic Systems}
	Let the triplet $(\Omega, \mathcal{F}, \mathbb{P})$ denote a probability space with a sample space $ \Omega $, filtration $ \mathcal{F} $, and the probability measure $ \mathbb{P} $. The filtration $\mathbb{F}= (\mathcal{F}_s)_{s\geq 0}$ satisfies the usual conditions of right continuity and completeness \cite{oksendal}. Let $ (W_s)_{s\geq0} $ be an $ r $-dimensional $ \mathbb{F} $-Brownian motion. 
	\begin{definition}
		A switched stochastic system is a tuple $S=(\mathbb{R}^n,M,\mathcal{M},F,G)$, where
		\begin{itemize}
			\item $\mathbb{R}^n$ is the state space;
			\item $M=\{1,2,\ldots,l\}$ is a finite set of modes;
			\item $\mathcal{M}$ is a subset of the set of all piece-wise constant c\`adl\`ag (i.e. right continuous and with left limits) functions of time from $\mathbb{R}^+_0$ to $M$, and characterized by a finite number of discontinuities on all bounded interval in $\mathbb{R}^+_0$;
			\item $F=\{f_1,f_2,\ldots,f_l\}$ and $G=\{g_1,g_2,\ldots,g_l\}$ are such that for any $m \in M$, $f_m:\mathbb{R}^n \rightarrow \mathbb{R}^n$  and $g_m:\mathbb{R}^n \rightarrow \mathbb{R}^{n \times r}$ satisfy standard local Lipschitz continuity and linear growth.		
		\end{itemize}
	\end{definition}
	A continuous-time stochastic process $\xi:\Omega\times \mathbb{R}_0^+\rightarrow\mathbb{R}^n$ is a solution process of $S$ if there exists $\mu \in \mathcal{M}$ satisfying
	\begin{equation}\label{eq:0}
	\diff\xi=f_\mu(\xi)\diff t+g_\mu(\xi)\diff W_t
	\end{equation}
	$\mathbb{P}$-almost surely ($\mathbb{P}$-a.s.) at each time $t \in \mathbb{R}_0^+$. For any given $m\in M$, we denote $S_m$ as the subsystem of $S$ defined as
	\begin{equation} \label{eq:1}
	\diff\xi=f_m(\xi)\diff t+g_m(\xi)\diff W_t.
	\end{equation}
The solution process of $S_m$ exists and is unique due to the assumptions on $f_m$ and $g_m$ \cite{oksendal}. We write $\xi^\mu(t)$ to denote the value of the solution process at time $t\in\mathbb{R}_0^+$ under the switching signal $\mu$, starting from the initial state $\xi^\mu(0)=x_0$ $\mathbb{P}$-a.s. Note that a solution process of $S_m$ is also a solution process of $S$ corresponding to constant switching signal $\mu(t)=m$, for all $t\in\mathbb{R}_0^+$. We also use $\xi^m(t)$ to denote the value of solution process of $S_m$ at time $t\in\mathbb{R}_0^+$, starting from the initial state of $\xi^m(0)=x_0$ $\mathbb{P}$-a.s.
The generator $\mathcal D$ of the solution process $\xi$ acting on function $B:\mathbb{R}^n \rightarrow \mathbb{R}$ is defined as follows.
	\begin{definition}\label{Generator} For any given $m\in M$, the generator $\mathcal{D}$ of the process $\xi$ of the stochastic system $S_m$ acting on function $B:\mathbb{R}^n \rightarrow \mathbb{R}$ is given by
		\begin{align}
		\mathcal{D}B(x_0,m)=\lim_{t\to 0} \dfrac{\mathbb{E}[B(\xi^m(t))| \xi^m(0)=x_0]-B(x_0)}{t}.   
		\end{align}
		By using Dynkin's formula \cite{lcg},  one has,
		\begin{align}
		\mathbb{E}[B&(\xi^m(t_2)|\xi^m(t_1)]\nonumber\\&=B(\xi^m(t_1)+\mathbb{E}[\int\limits_{t_1}^{t_2}\mathcal{D}B(\xi^m(t),m)\diff t|\xi^m(t_1)],
		\end{align}
		for $t_2\geq t_1\geq0$.
	\end{definition}
	
	\subsection{Linear Temporal Logic Over Finite Traces}
	In this paper, we consider specifications represented using linear temporal logic over finite traces, referred to as LTL$_F$ \cite{de2013linear}. LTL$_F$ uses the same syntax of LTL over infinite traces given in  \cite{baier2008principles}. Note that, the semantics of LTL$_F$ are however limited to interpretation over finite traces. The LTL$_F$ formulas over a set $ \Pi $ of atomic propositions are obtained as
	\begin{equation}
	\varphi ::=  \top \mid p \mid \neg \varphi \mid \varphi_1 \wedge \varphi_2 \mid \varphi_1 \vee\varphi_2\mid\medcirc \varphi  \mid  \lozenge\varphi \mid \square\varphi \mid \varphi_1\mathcal{U}\varphi_2, \nonumber
	\end{equation}
	where $p \in \Pi$, $\top$ represents true, $\medcirc $ is the next operator, $\lozenge$ is eventually, $\square$ is always, and $\mathcal{U}$ is until. The semantics of LTL$_F$ is given in terms of \textit{finite traces}, \textit{i.e.}, finite words $\sigma$, denoting a finite non-empty sequence of consecutive steps over $\Pi$. Detailed definitions for the semantics of LTL$_F$ have been omitted due to lack of space and can be found in \cite{Jagtap.2018}. 
	
	In this paper, we consider only safety properties \cite{baier2008principles}. In addition, we exclude the next ($\medcirc$) operator which enables us to describe behaviour of continuous trajectories using such properties. Hence, we use a subset of LTL$_F$ called safe-LTL$_{F\backslash\medcirc}$ as introduced in \cite{6942758}.
	
	\begin{definition}\label{safe_LTL_f} An LTL$_{F}$ formula is called a safe-LTL$_{F\backslash\medcirc}$ formula if it can be represented in a positive normal form, i.e., negations can only occur adjacent to atomic propositions, using temporal operator always ($\square$).
	\end{definition}
	
	Now, we define deterministic finite automata which can be used to represent LTL${_F}$ formulas. 
	\begin{definition}\label{DFA}
		A deterministic finite automaton $($DFA$)$ is a tuple $\mathcal{A}=(Q,Q_0,\Sigma,$ $\delta,F)$, where $Q$ is a finite set of states, $Q_0\subseteq Q$ is a set of initial states, $\Sigma$ is a finite set $($a.k.a. alphabet$)$, $\delta: Q\times\Sigma\rightarrow Q$ is a transition function, and $F\subseteq Q$ is a set of accepting states.
	\end{definition}
	We use notation $q\overset{\sigma}{\longrightarrow} q'$ to denote transition relation $(q,\sigma,q')\in\delta$.
	A finite word $\sigma=(\sigma_0,\sigma_1,\ldots,\sigma_{n-1})\in \Sigma^n$ is accepted by a DFA $\mathcal{A}$ if there exists a finite state run $q=(q_0,q_1,\ldots,q_{n})\in Q^{n+1}$ such that $q_0\in Q_0$, $q_k \overset{\sigma_k}{\longrightarrow} q_{k+1}$ for all $0\leq k< n$ and $q_{n}\in F$. The accepted language of $\mathcal{A}$, denoted by $\mathcal{L}(\mathcal{A})$, is the set of all words accepted by $\mathcal{A}$.
	According to \cite{de2015synthesis}, every LTL$_F$ formula $\varphi$ can be translated to a DFA $\mathcal{A}_\varphi$ that accepts the same language as $\varphi$, \textit{i.e.}, $\mathcal{L}(\varphi)=\mathcal{L}(\mathcal{A}_\varphi)$. Such DFA can be constructed explicitly or symbolically using existing tools: SPOT \cite{duret2016spot}, MONA \cite{henriksen1995mona}.
	
	\begin{remark}
		For a given LTL$_F$ formula $\varphi$ over atomic propositions $\Pi$, the associated DFA $\mathcal A_\varphi$ is usually constructed over the alphabet $\Sigma = 2^\Pi$.
		Without loss of generality, we work with the set of atomic propositions directly as the alphabet rather than its power set.
	\end{remark}

	\subsection{Property Satisfaction by Switched Stochastic Systems}
	For a given switched stochastic system $S=(\mathbb{R}^n,M,\mathcal{M},F,G)$ with dynamics \eqref{eq:0}, the solution processes over finite time intervals are connected to LTL$_{F\backslash \medcirc}$ formulas with the help of a measurable labeling function $L: \mathbb{R}^n \rightarrow \Pi$, where $\Pi$ is the set of atomic propositions.  
	\begin{definition}
		For a switched stochastic system $S=(\mathbb{R}^n,M,\mathcal{M},F,G)$ and the labeling function $L:\mathbb{R}^n \rightarrow \Pi$, a finite sequence $\sigma_{\xi}=(\sigma_0,\sigma_1,\ldots,\sigma_{n-1}) \in \Pi^{n}$ is a finite trace of the solution process $\xi$ over a finite time horizon $[0,T)\subset\mathbb{R}_0^+$ if there exists an associated time sequence $t_0,t_1,\ldots,t_{n-1}$ such that $t_0=0$, $t_n=T$, and for all $j \in \{0,1,\ldots,n-1\}$, $t_j \in \mathbb{R}_{0}^+$ following conditions hold
		\begin{itemize}
			\item $t_j<t_{j+1}$;
			\item $\xi^{\mu}(t_j)\in L^{-1}({\sigma_j})$;
			\item If $\sigma_j \neq \sigma_{j+1}$, then for some $t_j' \in [t_j,t_{j+1}]$, $\xi^{\mu}(t)\in L^{-1}({\sigma_j})$ for all $t \in (t_j,t_j')$; $\xi^{\mu}(t)\in L^{-1}({\sigma_{j+1}})$ for all $t \in (t_j',t_{j+1})$; and either $\xi^{\mu}(t_j')\in L^{-1}(\sigma_j)$ or $\xi^{\mu}(t_j')\in L^{-1}(\sigma_{j+1})$.
		\end{itemize}
	\end{definition}
	
	Next we define the probability that the solution process $\xi$ of the switched stochastic system  $S$ starting from some initial state $\xi^{\mu}(0)=x_0\in \mathbb{R}^n$ satisfies safe-LTL$_{F\backslash \medcirc}$ formula $\varphi$ over a finite time horizon $[0,T)\subset\mathbb{R}_0^+$.
	\begin{definition}
		Consider a switched stochastic system $S$ and a safe-LTL$_{F\backslash\medcirc}$ formula $\varphi$ over $\Pi$. Then $\mathbb{P}_{x_0}\{\sigma_{\xi} \models \varphi\}$ is the probability that $\varphi$ is satisfied by the solution process $\xi$ of the system $S$ starting from the initial value of $x_0\in \mathbb{R}^n$ over a finite time horizon $[0,T)\subset\mathbb{R}_0^+$.
	\end{definition}
	\begin{remark}
		The set of atomic propositions $\Pi=\{p_0,p_1,\ldots,p_N\}$ and the labeling function $L: \mathbb{R}^n \rightarrow \Pi$ provide a measurable partition of the state space $\mathbb{R}^n = \cup_{i=1}^N X_i$ as  $X_i:=L^{-1}(p_i)$. Without loss of generality, we assume that $X_i\neq \emptyset$ for any $i$.
	\end{remark}
	\subsection{Problem Formulation}
	\begin{problem} 
		Given a switched stochastic system $S=(\mathbb{R}^n,M,\mathcal{M},F,G)$ with dynamics \eqref{eq:0}, a safe-LTL$_{F\backslash\medcirc}$ over a set $\Pi=\{p_0,p_1,\ldots,p_N\}$ of atomic propositions, and a labeling function $L:\mathbb{R}^n \rightarrow \Pi$, compute a lower bound on the probability $\mathbb{P}_{x_0}\{\sigma_{\xi} \models \varphi\}$ for all $x_0\in L^{-1}(p_i)$ for $i\in\{0,1,\ldots,N\}$.
	\end{problem}

	\begin{example} \label{example}
		Consider a switched stochastic system $S=(\mathbb{R}^2,M,\mathcal{M},F,G)$ with $M=\{1,2\}$, and dynamics
		\begin{align}
		&S_1:
		\begin{matrix}
		\diff\xi_1=-0.1\xi_2^2\diff t+\diff W_{1t},\\
		\ \ \diff\xi_2=-0.1\xi_1\xi_2\diff t+\diff W_{2t};
		\end{matrix}\\
		&S_2:
		\begin{matrix}
		\diff\xi_1=-0.1\xi_1^2\diff t+\diff W_{1t},\\
		\ \ \diff\xi_2=-0.1\xi_1\xi_2\diff t+\diff W_{2t}.
		\end{matrix}
		\end{align}
		Let the regions of interest be given as 
		\begin{align*}
		X_0&=\{(x_1,x_2)\in \mathbb{R}^2 \mid (x_1+5)^2+x_2^2\leq 2.5\},\\
		X_1&=\{(x_1,x_2)\in \mathbb{R}^2 \mid (x_1-5)^2+(x_2-5)^2\leq 3 \},\\
		X_2&=\{(x_1,x_2)\in \mathbb{R}^2 \mid (x_1-4)^2+(x_2+3)^2\leq 2 \}, \text{ and }\\
		X_3& = \mathbb{R}^2\setminus (X_0\cup X_1\cup X_2).
		\end{align*}
		The sets $X_0$, $X_1$, $X_2$, and $X_3$ are shown in Figure 1(a).
		
		The set of atomic propositions is given by $\Pi=\{p_0,p_1,p_2,p_3\}$, with labeling function $L(x) = p_i$ for any $x\in X_i$, $i\in\{0,1,2,3\}$. Given an initial state, we are interested in computing a tight lower bound on the probability that the solution process of $S$ over time horizon $[0,T)\subset\mathbb{R}_0^+$ satisfies the following specification: 
		\begin{itemize}
			\item If it starts in $X_0$, it will always stay away from $X_1$ or always stay away from $X_2$ within time horizon $[0,T)\subset\mathbb{R}_0^+$. If it starts in $X_2$, it will always stay away from $X_1$ within time horizon $[0,T)\subset\mathbb{R}_0^+$. 
		\end{itemize}
		This property can be expressed by the safe-LTL$_{F\backslash \medcirc}$ formula
		\begin{equation}
		\label{LTL_f}
		\varphi=(p_0\wedge(\square \neg p_1 \vee \square\neg p_2))\vee (p_2\wedge\square\neg p_1).
		\end{equation}
		The DFA corresponding to the negation of the safe-LTL$_F$ formula $\varphi$ in \eqref{LTL_f} is shown in Figure 1(b).
	\end{example}
		\begin{figure}[t!]
		\centering
		\subfigure[]{\includegraphics[scale=0.5, height = 5.2cm]{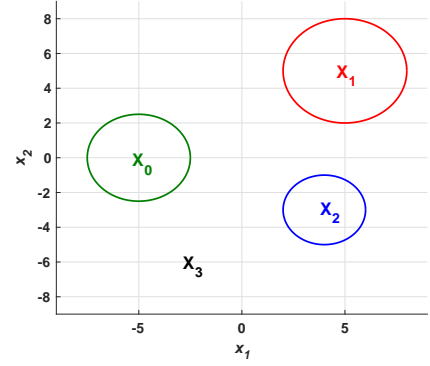}}
		\subfigure[]{\includegraphics[scale=0.5, height = 4.2cm]{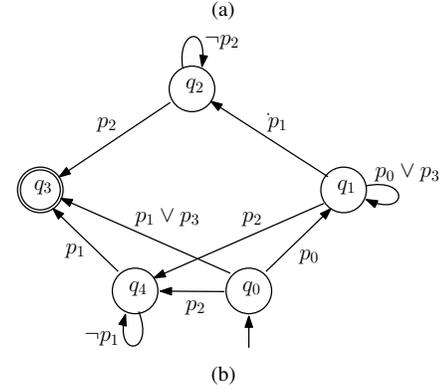}}
		\caption{(a) State space and regions of interest for Example~2.11, (b) DFA $\mathcal{A}_{\neg \varphi}$ that accepts all traces satisfying $\neg \varphi$ where $\varphi$ is given in \eqref{LTL_f}.}
	\end{figure}
	\section{Barrier Certificates}
	We recall that a function $B: \mathbb{R}^n \rightarrow \mathbb{R}$
	is a supermartingale for system $S$ if 
	$\mathbb{E}[B(\xi^{\mu}(t_2)\mid\xi^{\mu}(t_1)] \leq B(\xi^{\mu}(t_1))$ 
	for all $t_2 \geq t_1$. Although this condition is useful for the verification of stochastic systems \cite{4287147} for infinite horizons, it pre-supposes stochastic stability of the system and such a function may not exist in general. Hence, we use a relaxation of supermartingale condition called \textit{$c$-martingale} which enables us to provide results over a finite time horizon \cite{steinhardt2012finite} without any stability assumption.
	\begin{definition}
		A function $B: \mathbb{R}^n \rightarrow \mathbb{R}$ is a $c$-martingale for the system $S$ if
		$$\mathbb{E}[B(\xi^{\mu}(t_2)\mid\xi^{\mu}(t_1)] \leq B(\xi^{\mu}(t_1))+\int_{t_1}^{t_2}c(t)\diff t$$  
		for all $t_2 \geq t_1$, where $c$ is a function of time.
	\end{definition}
	
	We provide the following lemma and use it in the sequel. This lemma is a direct consequence of \cite[Theorem 1]{kushner1965stability}.
	\begin{lemma}\label{lemma_1}
		Let $B: \mathbb{R}^n\rightarrow \mathbb R_0^{+}$ be a non-negative $c$-martingale for the system $S$. Then for any constant $\lambda>0$ and any initial condition $x_0\in \mathbb{R}^n$, 
		\begin{equation}
		\mathbb{P}\{\sup_{0\leq t < T}B(\xi^{\mu}(t))\geq\lambda\mid x(0)=x_0\}\leq\frac{B(x_0)+\int_{0}^{T}c(t)\diff t}{\lambda}.
		\label{eq:lemma}
		\end{equation}
	\end{lemma}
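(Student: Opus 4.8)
The plan is to reduce the statement to a martingale-type maximal inequality by constructing an auxiliary nonnegative supermartingale from $B$ and the accumulated drift $\int_0^t c(s)\diff s$, and then invoke the nonnegative supermartingale maximal inequality (this is the content of \cite[Theorem 1]{kushner1965stability}). First I would define the process
\begin{equation}
Y(t) \Let B(\xi^{\mu}(t)) + \int_{t}^{T} c(s)\diff s = B(\xi^{\mu}(t)) + \int_{0}^{T} c(s)\diff s - \int_{0}^{t} c(s)\diff s, \nonumber
\end{equation}
for $t \in [0,T)$. The idea is that subtracting the \emph{already-accumulated} drift $\int_0^t c(s)\diff s$ should exactly compensate for the $c$-martingale slack, turning $B(\xi^{\mu}(t))$ into a genuine supermartingale. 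Indeed, using the $c$-martingale property with the two time instances $t_1 \le t_2$ gives
\begin{equation}
\mathbb{E}[B(\xi^{\mu}(t_2)) \mid \mathcal{F}_{t_1}] - \int_0^{t_2} c(s)\diff s \;\le\; B(\xi^{\mu}(t_1)) + \int_{t_1}^{t_2} c(s)\diff s - \int_0^{t_2} c(s)\diff s \;=\; B(\xi^{\mu}(t_1)) - \int_0^{t_1} c(s)\diff s, \nonumber
\end{equation}
so that $Z(t) \Let B(\xi^{\mu}(t)) - \int_0^t c(s)\diff s$ is a supermartingale.

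The second step handles nonnegativity, which is where some care is needed: $Z(t)$ need not be nonnegative because we have subtracted the (possibly large) integral term, so I cannot apply the maximal inequality to $Z$ directly. Instead I would work with $Y(t) = Z(t) + \int_0^T c(s)\diff s = B(\xi^{\mu}(t)) + \int_t^T c(s)\diff s$, which is a supermartingale (adding a constant preserves the property) and is nonnegative provided $\int_t^T c(s)\diff s \ge 0$ — an assumption I would either impose or absorb into the hypotheses on $c$, since $B \ge 0$ already. Applying the classical nonnegative supermartingale maximal inequality of \cite{kushner1965stability} to $Y$ yields
\begin{equation}
\mathbb{P}\Big\{\sup_{0 \le t < T} Y(t) \ge \lambda \;\Big|\; x(0)=x_0\Big\} \;\le\; \frac{\mathbb{E}[Y(0) \mid x(0)=x_0]}{\lambda} \;=\; \frac{B(x_0) + \int_0^T c(s)\diff s}{\lambda}. \nonumber
\end{equation}

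The final step translates this bound on $Y$ back to the desired bound on $B$. Since $\int_t^T c(s)\diff s \ge 0$ we have $B(\xi^{\mu}(t)) \le Y(t)$ pointwise, hence the event $\{\sup_{0\le t<T} B(\xi^{\mu}(t)) \ge \lambda\}$ is contained in $\{\sup_{0\le t<T} Y(t) \ge \lambda\}$, and monotonicity of probability gives exactly \eqref{eq:lemma}. \textbf{The main obstacle} I anticipate is the sign and integrability bookkeeping on $c$: the inequality as stated is only meaningful and the supermartingale argument only clean when the residual drift $\int_t^T c(s)\diff s$ stays nonnegative (which it will if $c \ge 0$, the typical situation), so I would make that hypothesis explicit or note that it is subsumed by the standing assumptions. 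A secondary technical point is ensuring the right-continuity of $\xi^{\mu}$ (guaranteed by the c\`adl\`ag structure and the filtration satisfying the usual conditions) so that the maximal inequality applies to the supremum over the uncountable index set $[0,T)$ rather than merely a countable dense subset.
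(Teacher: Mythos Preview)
Your proposal is correct. The paper itself does not give a proof of this lemma at all: it simply states that the inequality ``is a direct consequence of \cite[Theorem 1]{kushner1965stability}'' and moves on. What you have written is exactly the standard drift-compensation argument that underlies Kushner's maximal inequality---shift $B(\xi^{\mu}(t))$ by the remaining drift $\int_t^T c(s)\diff s$ to obtain a nonnegative supermartingale, apply the classical maximal inequality, and translate back via the pointwise domination $B\le Y$. So rather than taking a different route, you have unpacked the citation the paper leaves implicit; your observation that the argument needs $c\ge 0$ (which the paper does assume in Theorems~\ref{barrier1} and~\ref{barrier2}) and a c\`adl\`ag modification for the uncountable supremum are the right technical caveats to flag.
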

	
The next two theorems provide inequalities on barrier certificates to give an upper bound on reachability probability. These theorems have been inspired by the results in \cite{4287147} that uses supermartingales for safety verification of continuous-time switching diffusion systems.
	
	\begin{theorem} \label{barrier1}
		Consider a switched stochastic system $S=(\mathbb{R}^n,M,\mathcal{M},F,G)$ with dynamics \eqref{eq:0} and sets $X_0, X_1 \subseteq \mathbb{R}^n$. Suppose there exists a twice differentiable function $B: \mathbb{R}^n\rightarrow \mathbb R_0^{+}$ and constants $c\geq 0$ and $\gamma \in [0,1]$, such that
		\begin{align} 
		&B(x) \leq \gamma \quad \forall x \in X_0,\label{eq:th1}\\
		&B(x) \geq 1 \quad \forall x \in X_1,\label{eq:th2}
		\end{align}
		\begin{align}
		&\frac{\partial B}{\partial x}(x)f_m(x)+\frac{1}{2}\text{Tr}\big(g_m^T(x)\frac{\partial^2 B}{\partial x^2}(x)g_m(x)\big) \leq c \nonumber\\
		&\quad\quad\quad\quad\quad\quad\quad\quad\quad\forall x \in \mathbb{R}^n, \forall m\in M.\label{eq:th3}
		\end{align}
		Then the probability that the solution process $\xi$ of the system $S$ starts from initial state $\xi^{\mu}(0)=x_0\in X_0$ and reaches $X_1$ within time horizon $[0,T)\subset\mathbb{R}_0^+$ is upper bounded by $\gamma+c T$.
	\end{theorem}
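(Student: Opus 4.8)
The plan is to identify the left-hand side of \eqref{eq:th3} with the infinitesimal generator $\mathcal{D}B(x,m)$ of Definition \ref{Generator}, so that the hypothesis \eqref{eq:th3} reads $\mathcal{D}B(x,m) \leq c$ uniformly over all $x \in \mathbb{R}^n$ and all modes $m \in M$. The argument then proceeds in three stages: first establish that $B$ is a $c$-martingale for the switched system $S$ with the constant time-function $c(t) \equiv c$; then invoke Lemma \ref{lemma_1} with threshold $\lambda = 1$; finally relate the reachability event to the supremum event that Lemma \ref{lemma_1} controls.

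First I would derive the $c$-martingale property. For a fixed mode $m$, Dynkin's formula from Definition \ref{Generator} gives $\mathbb{E}[B(\xi^m(t_2)) \mid \xi^m(t_1)] = B(\xi^m(t_1)) + \mathbb{E}[\int_{t_1}^{t_2}\mathcal{D}B(\xi^m(t),m)\,\diff t \mid \xi^m(t_1)]$, and bounding the integrand by $c$ via \eqref{eq:th3} yields $\mathbb{E}[B(\xi^m(t_2)) \mid \xi^m(t_1)] \leq B(\xi^m(t_1)) + c(t_2-t_1)$. The main obstacle is that a general switching signal $\mu \in \mathcal{M}$ is not constant, so Dynkin's formula does not apply directly on $[t_1,t_2]$. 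However, since every $\mu$ is piece-wise constant c\`adl\`ag with only finitely many discontinuities on any bounded interval, I would partition $[t_1,t_2]$ at the switching instants, apply the constant-mode inequality on each subinterval, and chain the resulting bounds using the tower property of conditional expectation. Because $\mathcal{D}B(x,m) \leq c$ holds uniformly in $m$, each subinterval contributes at most $c$ times its length, and summing over the partition gives $\mathbb{E}[B(\xi^\mu(t_2)) \mid \xi^\mu(t_1)] \leq B(\xi^\mu(t_1)) + c(t_2-t_1)$ for the switched process, which is precisely the $c$-martingale property.

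Next I would apply Lemma \ref{lemma_1} with $\lambda = 1$ and initial condition $x_0 \in X_0$. Since $\int_0^T c\,\diff t = cT$ and, by \eqref{eq:th1}, $B(x_0) \leq \gamma$, the lemma yields $\mathbb{P}\{\sup_{0 \leq t < T} B(\xi^\mu(t)) \geq 1 \mid x(0) = x_0\} \leq B(x_0) + cT \leq \gamma + cT$.

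Finally I would close the argument by relating reachability to this supremum event. By \eqref{eq:th2}, $B(x) \geq 1$ for every $x \in X_1$, so if the trajectory visits $X_1$ at some time $t \in [0,T)$ then $B(\xi^\mu(t)) \geq 1$ and hence $\sup_{0 \leq t < T} B(\xi^\mu(t)) \geq 1$. The reachability event is therefore contained in the supremum event, and monotonicity of probability delivers the claimed bound $\gamma + cT$. The only genuinely delicate point is the extension of Dynkin's formula to switched signals in the first stage; the remaining steps follow immediately from the hypotheses and Lemma \ref{lemma_1}.
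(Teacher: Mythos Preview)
Your proposal is correct and follows essentially the same route as the paper: identify the left-hand side of \eqref{eq:th3} with the generator, use Dynkin's formula to obtain the $c$-martingale property, apply Lemma~\ref{lemma_1} with $\lambda=1$, and then use the inclusion $X_1\subseteq\{x\mid B(x)\ge 1\}$ together with $B(x_0)\le\gamma$ to bound the reachability probability by $\gamma+cT$. Your treatment of the switching signal---partitioning at the finitely many switching instants and chaining via the tower property---is in fact more explicit than the paper's, which simply asserts the $c$-martingale property ``for all $m\in M$'' without spelling out the passage from constant modes to general $\mu\in\mathcal{M}$.
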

	\begin{proof}
		The generator associated with the system $S_m$ is given by 
		$$\mathcal{D}B(x,m)=\frac{\partial B}{\partial x}(x)f_m(x)+\frac{1}{2}\text{Tr}(g_m^T(x)\frac{\partial^2 B}{\partial x^2}(x)g_m(x)),$$
		where $m\in M$. Now, by using Dynkin's formula, we can show that $B(x)$ is a nonnegative $c$-martingale for all $m\in M$ and hence \eqref{eq:lemma} in Lemma \ref{lemma_1} holds. Using \eqref{eq:th1} and the fact that $X_1\subseteq\{x\in \mathbb{R}^n\mid B(x)\geq 1\}$, we have $\mathbb{P}\{\xi^{\mu}(t) \in X_1 \  \text{for some} \  0\leq t<T \mid \xi^{\mu}(0)=x_0\} \leq \mathbb{P}\{\sup_{0\leq t < T}B(\xi^{\mu}(t)) \geq 1 \mid \xi^{\mu}(0)=x_0\} \leq B(x_0)+cT \leq \gamma+cT$. 
		This concludes the proof.
	\end{proof} 
	If there exists a twice differentiable function $B:\mathbb{R}^n\rightarrow\mathbb{R}^+_0$ satisfying the conditions \eqref{eq:th1}-\eqref{eq:th3} of Theorem \ref{barrier1}, then we
	call it a common barrier certificate. In most of the cases, finding common barrier certificates may not be feasible or may result in conservative probability bounds. To alleviate these issues, we provide results using multiple barrier certificates for switched stochastic systems with a restricted set of switching signals as defined below. 
	
	Consider a switched stochastic system $S$ as defined in \eqref{eq:0} and $m,m' \in M=\{1,2,\ldots,k\}$. At any instant $t$, the transition probability between modes is given by
	\begin{equation}\notag\label{tran_prob}
	\mathbb{P}\{(m,m'),t+h\}= 
	\begin{cases}
	\lambda_{mm'}(\xi^{\mu}(t))h, & \text{if }m \neq m', \\
	1+\lambda_{mm}(\xi^{\mu}(t))h, & \text{if }m=m', \\
	\end{cases}
	\end{equation}
	where $h>0$, $\lambda_{mm'}: \mathbb{R}^n \rightarrow \mathbb{R}$ is a bounded and Lipschitz continuous function representing transition rates such that $\lambda_{mm'}(x) \geq 0$ for all $x\in\mathbb{R}^n$ if $ m \neq m'$ and $\sum_{m' \in M}\lambda_{mm'}(x)=0$ for all $m\in M$. It is assumed that the transition from one mode to another is independent of the Wiener process $W_t$. 
	
	The next theorem provides conditions to obtain an upper bound on the reachability probability for switched stochastic systems using multiple barrier certificates.
	
	\begin{theorem} \label{barrier2}
		Consider a switched stochastic system $S=(\mathbb{R}^n,M,\mathcal{M},F,G)$ with dynamics \eqref{eq:0}, sets $X_0, X_1 \subseteq \mathbb{R}^n$, and the transition rates between two switching modes $m,m'\in M$ as $\lambda_{mm'}:\mathbb{R}^n\rightarrow\mathbb{R}$. Suppose there exists a set of twice differentiable functions $B_m:\mathbb{R}^n\rightarrow\mathbb{R}_0^+$, and constants $c\geq 0$ and $\gamma \in [0,1]$, such that 
		\begin{align} 
		&B_m(x) \leq \gamma \quad \forall x \in X_0,\label{eq:th21}\\
		&B_m(x) \geq 1 \quad  \forall x \in X_1, \label{eq:th22}\\
		&\frac{\partial{B_m}}{\partial{x}}(x)f_m(x)+\frac{1}{2}\text{Tr}(g_m^T(x)\frac{\partial^2{B_m}}{\partial x^2}g_m(x))\nonumber  
		\\ &\quad \quad \quad + \sum_{m' \in M} \lambda_{mm'}(x)B_{m'}(x) \leq c \quad\forall x \in \mathbb{R}^n.
		\label{eq:th23}
		\end{align}
		for all $m\in M$. Then the probability that the solution process $\xi$ of the system $S$ starts from initial state $\xi^{\mu}(0)=x_0\in X_0$ and reaches $X_1$ within time horizon $[0,T)\subset\mathbb{R}_0^+$ is upper bounded by $\gamma+c T$.
	\end{theorem}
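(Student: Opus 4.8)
The plan is to mirror the proof of Theorem~\ref{barrier1}, the essential new ingredient being that the switching is now a genuine Markov jump process, so the relevant real-valued process is the mode-dependent composition $t \mapsto B_{\mu(t)}(\xi^{\mu}(t))$ rather than a single $B(\xi^{\mu}(t))$. First I would consider the joint process $(\xi^{\mu}(t),\mu(t))$, which under the given transition-rate structure is a Markov process on $\mathbb{R}^n \times M$, and define the function $V(x,m) := B_m(x)$. Because each $B_m$ maps into $\mathbb{R}_0^+$, the composition $B_{\mu(t)}(\xi^{\mu}(t))$ is non-negative.

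The key computation is the generator $\mathcal{L}$ of this joint process acting on $V$. It decomposes into a diffusion part, obtained exactly as in Definition~\ref{Generator}, and a jump part coming from the mode transitions. Since transitions are independent of the Wiener process and occur at rates $\lambda_{mm'}$, the jump part evaluates to $\sum_{m' \in M} \lambda_{mm'}(x)\big(B_{m'}(x) - B_m(x)\big)$; using the standing assumption $\sum_{m' \in M} \lambda_{mm'}(x) = 0$, this collapses to $\sum_{m' \in M} \lambda_{mm'}(x) B_{m'}(x)$. Hence $\mathcal{L}V(x,m)$ is precisely the left-hand side of \eqref{eq:th23}, so that condition gives $\mathcal{L}V(x,m) \leq c$ for all $x \in \mathbb{R}^n$ and all $m \in M$.

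Next, applying Dynkin's formula to the joint process yields
$$\mathbb{E}\big[B_{\mu(t_2)}(\xi^{\mu}(t_2)) \mid \mathcal{F}_{t_1}\big] = B_{\mu(t_1)}(\xi^{\mu}(t_1)) + \mathbb{E}\Big[\int_{t_1}^{t_2} \mathcal{L}V(\xi^{\mu}(t),\mu(t))\,\diff t \;\Big|\; \mathcal{F}_{t_1}\Big] \leq B_{\mu(t_1)}(\xi^{\mu}(t_1)) + c(t_2 - t_1),$$
which shows that the non-negative process $t \mapsto B_{\mu(t)}(\xi^{\mu}(t))$ is a $c$-martingale. Since Lemma~\ref{lemma_1} depends only on the $c$-martingale property of a non-negative process, it applies to this process, giving, with $\lambda = 1$ and constant $c$, the bound $\mathbb{P}\{\sup_{0 \leq t < T} B_{\mu(t)}(\xi^{\mu}(t)) \geq 1 \mid x_0\} \leq B_{\mu(0)}(x_0) + cT$.

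Finally I would close the argument exactly as before: by \eqref{eq:th22} every $x \in X_1$ satisfies $B_m(x) \geq 1$ for all modes $m$, so reaching $X_1$ before time $T$ forces $\sup_{0 \leq t < T} B_{\mu(t)}(\xi^{\mu}(t)) \geq 1$; and by \eqref{eq:th21} the initial value obeys $B_{\mu(0)}(x_0) \leq \gamma$ since $x_0 \in X_0$. Chaining these inequalities bounds the reachability probability by $\gamma + cT$. The main obstacle is the second step: justifying the exact form of the jump contribution to the hybrid generator and confirming that Dynkin's formula—and hence the $c$-martingale bound—holds for the mode-switching composition. The boundedness and Lipschitz continuity of the rates $\lambda_{mm'}$, together with their independence from $W_t$, are what make this rigorous, while the remaining steps are a routine adaptation of Theorem~\ref{barrier1}.
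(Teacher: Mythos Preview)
Your proposal is correct and follows exactly the approach the paper takes: the paper's own proof consists solely of the sentence ``The proof is similar to that of Theorem~\ref{barrier1},'' and what you have written is precisely the natural elaboration of that remark, identifying the hybrid generator of the joint process $(\xi^\mu,\mu)$ acting on $V(x,m)=B_m(x)$, invoking Dynkin's formula to obtain the $c$-martingale property, and then closing with Lemma~\ref{lemma_1} and conditions \eqref{eq:th21}--\eqref{eq:th22}. You have in fact supplied more detail than the paper does.
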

	
	\begin{proof} The proof is similar to that of Theorem \ref{barrier1}. 
	\end{proof}
	\section{Decomposition into Sequential Reachability}
	Consider a DFA $\mathcal{A}_{\neg\varphi}=(Q,Q_0,\Pi,\delta,F)$ that accepts all finite words over $\Pi$ that satisfy $\neg\varphi$. 
	The sequence $\textbf{q}=(q_0,q_1,\ldots,q_n)\in Q^{n+1}$, $n\in\mathbb{N}$ is called an accepting state run if $q_0\in Q_0$, $q_n\in F$, and there exists a finite word $\sigma = (\sigma_0,\sigma_1,\ldots,\sigma_{n-1})\in\Pi^n$ such that $q_k \overset{\sigma_k}{\longrightarrow} q_{k+1}$ for all $k\in\{0,1,\ldots, n-1\}$. We denote the set of such finite words by $\sigma(\textbf{q})\subseteq \Pi^n$. 
	We also indicate the length of $\textbf{q}\in Q^{n+1}$ by $|\textbf{q}|$, which is $n+1$. Let $\mathcal{R}$ be the set of all finite accepting state runs starting from $p\in\Pi$ excluding self-loops, where
	\begin{equation}\notag
	\begin{split}
	\mathcal{R}\hspace{-.2em} := \hspace{-.2em} \{\textbf{q}\hspace{-.2em}  =\hspace{-.2em}  (q_0,q_1,\ldots,q_n)\hspace{-.2em} \in \hspace{-.2em} Q^{n+1} \mid q_n\hspace{-.2em} \in \hspace{-.2em} F, q_k\hspace{-.2em} \neq\hspace{-.2em}  q_{k+1},\forall k\hspace{-.2em} <\hspace{-.2em} n\}.
	\end{split}
	\end{equation}
	Computation of $\mathcal{R}$ can be done algorithmically by viewing $\mathcal{A}_{\neg\varphi}$ as a directed graph $\mathcal{G}=(\mathcal{V},\mathcal{E})$ with vertices $\mathcal{V}=Q$ and edges $\mathcal{E}\subseteq\mathcal{V}\times\mathcal{V}$ such that $(q,q')\in\mathcal{E}$ if and only if $q'\neq q$ and there exist $p\in\Pi$ such that $q\overset{p}{\longrightarrow} q'$. From the construction of the graph, it is obvious that the finite path in the graph starting from vertices $q_0\in Q_0$ and ending at $q_F\in F$ is an accepting state run $\textbf{q}$ of $\mathcal{A}_{\neg\varphi}$ without any self-loop and therefore belongs to $\mathcal{R}$. One can easily compute $\mathcal{R}$ using depth first search algorithm \cite{russell2003artificial}.
	
	For each $p \in \Pi$, we define a set $\mathcal{R}^p$ as
	\begin{equation}
	\label{eq:runs}
	\mathcal{R}^p := \{\textbf{q} = (q_0,q_1,\ldots,q_n)\in\mathcal R \mid \sigma(q_0,q_1)=p\}.
	\end{equation}
	
	Decomposition into sequential reachability is performed as follows. For any $\textbf{q}=(q_0,q_1,\ldots,q_n) \in \mathcal{R}^p \ \forall p \in \Pi$, we define $\mathcal{P}^p(\textbf{q})$ as a set of all state runs of length 3,
	\begin{equation} \label{eq:reachability}
	\mathcal{P}^p(\mathbf{q}):=\{(q_k,q_{k+1},q_{k+2}) \mid 0 \leq k \leq n-2\}.
	\end{equation}

\begin{remark}
		Note that $\mathcal{P}^p(\textbf{q})=\emptyset$ for $|\textbf{q}|=2$. Any accepting state run of length $2$ begins from a subset of the state space that already satisfies $\neg\varphi$ and hence gives trivial zero probability for satisfying the specification, and is thus neglected in the sequel.
	\end{remark}
	\def\example{\par{\textit{Example 2.11}} \ignorespaces}
	\def\endexaple{}
	
	\begin{example}
		For safe-LTL$_{F\backslash\medcirc}$ formula $\varphi$ given in \eqref{LTL_f}, Figure 1(b) shows a DFA $\mathcal{A}_{\neg\varphi}$ that accepts all words that satisfy $\neg\varphi$. From Figure 1(b), we get $Q_0=\{q_0\}$ and $F=\{q_3\}$.
		The set of accepting state runs without self-loops is
		\begin{equation*}
		\mathcal{R}=\{(q_0,q_4,q_3),(q_0,q_1,q_2,q_3),(q_0,q_1,q_4,q_3),(q_0,q_3)\}.
		\end{equation*}
		The sets of $\mathcal{R}^p$ for $p \in \Pi$ are
		\begin{align*}
		&\mathcal{R}^{p_0}=\{(q_0,q_1,q_2,q_3),(q_0,q_1,q_4,q_3)\}, \ 
		\mathcal{R}^{p_1}=\{(q_0,q_3)\}, \\
		&\mathcal{R}^{p_2}=\{(q_0,q_4,q_3)\}, \ 
		\mathcal{R}^{p_3}=\{(q_0,q_3)\}. 
		\end{align*}
		The sets $\mathcal{P}^p(\textbf{q})$ for $\textbf{q}\in\mathcal{R}^p$ are as follows:
		\begin{align*}
		&\mathcal{P}^{p_0}(q_0,q_1,q_2,q_3)=\{(q_0,q_1,q_2),(q_1,q_2,q_3)\},\\
		&\mathcal{P}^{p_0}(q_0,q_1,q_4,q_3)=\{(q_0,q_1,q_4),(q_1,q_4,q_3)\}, \\
		&\mathcal{P}^{p_2}(q_0,q_4,q_3)\hspace{-0.2em}=\hspace{-0.2em}\{(q_0,q_4,q_3)\},\mathcal{P}^{p_1}(q_0,q_3)\hspace{-0.2em}=\hspace{-0.2em}\mathcal{P}^{p_3}(q_0,q_3)\hspace{-0.2em}=\hspace{-0.2em}\emptyset.
		\end{align*} 
	\end{example}
	\section{Computation of Probabilities Using Barrier Certificates}
	Having the set of state runs of lengths 3, we provide a systematic approach to compute lower bound on the probability that the solution process $\xi$ satisfies $\varphi$. Given the DFA $\mathcal{A}_{\neg\varphi}$ corresponding to specification $\neg\varphi$, we perform the computation of upper bound on reachability probability over each element of $\mathcal{P}^{p}(\textbf{q})$, $\textbf{q} \in \mathcal{R}^p$ using barrier certificates. 
	Next theorem provides an upper bound on the probability that the solution process $\xi$ satisfies the specification $\neg\varphi$.
	\begin{theorem} \label{upper_bound}
		For a given safe-LTL$_{F\backslash\medcirc}$ specification $\varphi$, let $\mathcal{A}_{\neg\varphi}$ be the DFA corresponding to its negation, $\mathcal R^p$ be the set defined in \eqref{eq:runs}, and $\mathcal{P}^p$ be the set of runs of length $3$ defined in \eqref{eq:reachability}.  Then the probability that the solution process of system $S$ starting from any initial state $x_0 \in L^{-1}(p)$ satisfies $\neg\varphi$ within time horizon $[0,T)$ is upper bounded by
		\begin{align} \label{eq:finalprob}
		&\mathbb{P}_{x_0}\hspace{-0.2em}\{\sigma_{\xi} \models\hspace{-0.2em} \neg\varphi\}\hspace{-0.2em} \le \hspace{-0.3em}\sum_{\bf q \in \mathcal{R}^p}\hspace{-0.2em}\prod\{(\gamma_\nu+c_\nu T)\hspace{-0.2em} \mid\hspace{-0.2em} \nu\hspace{-0.2em}=\hspace{-0.2em}(q,q',q'') \hspace{-0.2em}\in\hspace{-0.2em} \mathcal{P}^p(\bf{q})\},
		\end{align}
		where $\gamma_\nu+c_\nu T$ is the upper bound on the probability of the solution process of system $S$ starting from $X_0 := L^{-1}(\sigma(q,q'))$ and reaching $ X_1:=L^{-1}(\sigma(q',q''))$ within time horizon $[0,T)$ computed via Theorem~\ref{barrier1} (or Theorem~\ref{barrier2}). 
	\end{theorem}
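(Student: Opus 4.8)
The plan is to exploit the equivalence between the event $\{\sigma_\xi \models \neg\varphi\}$ and the event that the finite trace of $\xi$ is accepted by $\mathcal{A}_{\neg\varphi}$, and then to bound the acceptance probability by a union bound over accepting runs followed by a product decomposition of each run into its constituent reachability steps. Since $\mathcal{L}(\mathcal{A}_{\neg\varphi}) = \mathcal{L}(\neg\varphi)$, we have $\mathbb{P}_{x_0}\{\sigma_\xi \models \neg\varphi\} = \mathbb{P}_{x_0}\{\sigma_\xi \text{ is accepted by } \mathcal{A}_{\neg\varphi}\}$. A trace starting in $L^{-1}(p)$ is accepted precisely when it admits an accepting state run, and every such run, after collapsing self-loops, yields a self-loop-free accepting run belonging to the finite set $\mathcal{R}^p$ of \eqref{eq:runs}. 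I would therefore apply the union bound to obtain $\mathbb{P}_{x_0}\{\sigma_\xi \models \neg\varphi\} \le \sum_{\mathbf{q} \in \mathcal{R}^p} \mathbb{P}_{x_0}\{\xi \text{ realizes } \mathbf{q}\}$, where ``realizes $\mathbf{q}$'' means that the trace visits the regions $L^{-1}(\sigma(q_0,q_1)), L^{-1}(\sigma(q_1,q_2)), \ldots$ in the order dictated by $\mathbf{q}$. Discarding self-loops must be justified: a self-loop transition only forces the process to remain in a region, an event of probability at most one, so removing it can only enlarge the event and hence preserves the upper bound.

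Next, for a fixed run $\mathbf{q} = (q_0, \ldots, q_n)$, I would identify the triples in $\mathcal{P}^p(\mathbf{q})$ of \eqref{eq:reachability} with the consecutive reachability tasks along the run: the triple $\nu = (q_k, q_{k+1}, q_{k+2})$ corresponds to starting in $X_0 = L^{-1}(\sigma(q_k, q_{k+1}))$ and reaching $X_1 = L^{-1}(\sigma(q_{k+1}, q_{k+2}))$, so that $\mathcal{P}^p(\mathbf{q})$ enumerates exactly the $n-1$ links of the chain. The event that $\xi$ realizes $\mathbf{q}$ is contained in the intersection of these $n-1$ reachability events; I would write its probability as a product of conditional probabilities and bound each factor. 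This is where the main work lies: using the Markov property of the solution process of \eqref{eq:0} together with the uniformity of the bound in Theorem \ref{barrier1} (respectively Theorem \ref{barrier2}) -- which holds for \emph{every} initial condition in the source set $X_0$, since the proof there yields $B(x_0) + c T \le \gamma + c T$ for all $x_0 \in X_0$ -- I would conclude that, conditioned on the current state lying anywhere in $L^{-1}(\sigma(q_k, q_{k+1}))$, the probability of reaching $L^{-1}(\sigma(q_{k+1}, q_{k+2}))$ within $[0,T)$ is at most $\gamma_\nu + c_\nu T$. Multiplying these uniform conditional bounds gives $\mathbb{P}_{x_0}\{\xi \text{ realizes } \mathbf{q}\} \le \prod_{\nu \in \mathcal{P}^p(\mathbf{q})} (\gamma_\nu + c_\nu T)$, and summing over $\mathbf{q} \in \mathcal{R}^p$ yields \eqref{eq:finalprob}.

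The hard part will be making the factorization rigorous: I must phrase the conditioning so that each link can independently be granted the full horizon $T$ while still producing a valid upper bound. This is sound because completing the whole chain within $[0,T)$ implies completing each link within $[0,T)$, and the Markov property decouples successive links; the price is conservatism, since every link receives the entire budget $[0,T)$ rather than a sub-interval, and the uniform barrier bound absorbs the dependence on the unknown entry point into each region. A secondary technical point requiring care is the self-loop reduction of the first paragraph, which must be argued so that no accepting trace is omitted from the union over $\mathcal{R}^p$.
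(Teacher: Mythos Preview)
Your proposal is correct and follows essentially the same approach as the paper: a union bound over the self-loop-free accepting runs in $\mathcal{R}^p$, followed by a product decomposition of each run into the reachability links enumerated by $\mathcal{P}^p(\mathbf{q})$, with each factor bounded via Theorem~\ref{barrier1} (or Theorem~\ref{barrier2}). In fact you supply more justification than the paper itself, which asserts the product and sum steps without explicitly invoking the Markov property, the uniformity of the barrier bound over all entry points, or the argument that granting each link the full horizon $[0,T)$ only loosens the bound.
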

	\begin{proof} For $p\in\Pi$, consider an accepting run $\textbf{q} \in \mathcal{R}^p$ and set $\mathcal{P}^p(\textbf{q})$ as defined in \eqref{eq:reachability}. For an element $\nu=(q,q',q'')\in \mathcal{P}^p(\textbf{q})$, the upper bound on the probability that solution processes of $S$ starting from $L^{-1}(\sigma(q,q'))$ and reaching $L^{-1}(\sigma(q',q''))$ within time horizon $T$ is given by $\gamma_\nu+c_\nu T$. This follows from Theorem \ref{barrier1} (or Theorem \ref{barrier2}). Now the upper bound on the probability that the trace of the solution process reaches the accepting state following the path corresponding to $\textbf{q}$ is given by the product of the probability bounds corresponding to all elements $\nu=(q,q',q'')\in \mathcal{P}^p(\textbf{q})$ and is given by
		\begin{equation}
		\label{eq:bound_1}
		\mathbb{P}\{ \sigma(\textbf{q})\models \neg \varphi \} \leq \prod \left \{(\gamma_\nu+c_\nu T)\mid\hspace{-0.2em} \nu \hspace{-0.2em}=\hspace{-0.2em} (q,q',q'')\hspace{-0.2em}\in\hspace{-0.2em} \mathcal{P}^p(\textbf{q})\right \}.
		\end{equation}
		The upper bound on the probability that the solution process of system $S$ starting from any initial state $x_0\in L^{-1}(p)$ violate $\varphi$ can be computed by summing the probability bounds for all possible accepting runs as computed in \eqref{eq:bound_1} and is given by
		\begin{align*} 
		&\mathbb{P}_{x_0}\hspace{-0.2em}\{\sigma_{\xi} \models\hspace{-0.2em} \neg\varphi\}\hspace{-0.2em} \le \hspace{-0.3em}\sum_{\bf q \in \mathcal{R}^p}\hspace{-0.2em}\prod\{(\gamma_\nu+c_\nu T)\hspace{-0.2em} \mid\hspace{-0.2em} \nu\hspace{-0.2em}=\hspace{-0.2em}(q,q',q'') \hspace{-0.2em}\in\hspace{-0.2em} \mathcal{P}^p(q)\}. \nonumber
		\end{align*}
	\end{proof}
	Theorem~\ref{upper_bound} enables us to decompose the specification 
	into a collection of sequential reachabilities, compute bounds on the reachability probabilities using Theorem~\ref{barrier1} (or Theorem \ref{barrier2}), and then combine the bounds in a sum-product expression.
	\begin{remark}
		In case we are unable to find barrier certificates for some of the elements $\nu \in \mathcal P^p(\textbf q)$ in \eqref{eq:finalprob}, we replace the related term $(\gamma_\nu+c_\nu T)$ by the pessimistic bound $1$. In order to get a non-trivial bound in \eqref{eq:finalprob}, at least one barrier certificate must be found for each $\textbf{q} \in\mathcal R^p$.
	\end{remark}
	
	\begin{corollary}
		\label{lower_bound}
		Given the result of Theorem \ref{upper_bound}, the probability that the solution process of $S$ starting from any $x_0 \in L^{-1}(p)$ over time horizon $[0,T)\subset\mathbb{R}_0^+$ satisfies safe-LTL$_{F\backslash\medcirc}$ specification $\varphi$ is lower-bounded by
		\begin{equation}
		\mathbb{P}_{x_0}\{\sigma_{\xi}\models \varphi \}\geq1-\mathbb{P}_{x_0}\{\sigma_{\xi}\models \neg \varphi \}.\nonumber
		\end{equation}
	\end{corollary}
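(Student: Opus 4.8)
The plan is to exploit the fact that, under the LTL$_F$ semantics, the formula $\varphi$ and its negation $\neg\varphi$ are logical complements: for any finite trace $\sigma$ one has $\sigma \models \varphi$ if and only if $\sigma \not\models \neg\varphi$. Consequently the two measurable events $\{\sigma_\xi \models \varphi\}$ and $\{\sigma_\xi \models \neg\varphi\}$ partition the sample space, and the corollary reduces to a one-line complementary-probability argument built on top of Theorem~\ref{upper_bound}.

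First I would verify that the solution process $\xi$ almost surely induces a well-defined finite trace $\sigma_\xi$ over the horizon $[0,T)$. This relies on the almost-sure continuity of the sample paths together with the measurable partition $\mathbb{R}^n = \cup_i X_i$ induced by the labeling function $L$, which together guarantee that the conditions in the definition of a finite trace are satisfied on a set of full $\mathbb{P}$-measure. Second, invoking the complementarity noted above, on this full-measure set each sample path satisfies exactly one of $\varphi$ or $\neg\varphi$, so that
\[
\mathbb{P}_{x_0}\{\sigma_\xi \models \varphi\} + \mathbb{P}_{x_0}\{\sigma_\xi \models \neg\varphi\} = 1.
\]

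Rearranging this identity gives $\mathbb{P}_{x_0}\{\sigma_\xi \models \varphi\} = 1 - \mathbb{P}_{x_0}\{\sigma_\xi \models \neg\varphi\}$, which in particular yields the stated lower bound. The practical reason for phrasing the result as an inequality is that one may then replace $\mathbb{P}_{x_0}\{\sigma_\xi \models \neg\varphi\}$ by the computable upper bound furnished by Theorem~\ref{upper_bound}; the $\geq$ direction ensures that this substitution still produces a valid lower bound on the probability of satisfying $\varphi$.

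The only point requiring genuine care is the almost-sure well-definedness of the trace in the first step, since the complementarity argument presupposes that a trace exists and is classified by the labeling into exactly one of the two events. Once that measurability issue is settled via the continuity of $\xi$ and the measurable partition, the corollary follows immediately from the law of excluded middle for $\varphi$ and $\neg\varphi$, so I do not anticipate any serious obstacle beyond bookkeeping the null set on which the trace may fail to be defined.
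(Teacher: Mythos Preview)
Your proposal is correct and matches the paper's treatment: the corollary is stated there without proof, as an immediate consequence of the complementarity of $\varphi$ and $\neg\varphi$ together with the upper bound from Theorem~\ref{upper_bound}. Your additional remarks on the almost-sure well-definedness of $\sigma_\xi$ and the reason for the inequality formulation are accurate elaborations of points the paper leaves implicit.
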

	\section{Computation of Barrier Certificates}
	In this section, we provide the Counter-Example Guided Inductive Synthesis (CEGIS) framework for searching barrier certificates of specific forms satisfying conditions in Theorem~\ref{barrier1} (or Theorem \ref{barrier2}). The approach uses feasibility solvers for finding barrier certificates of a given parametric form using Satisfiability Modulo Theories (SMT) solvers such as Z3 \cite{z3} and dReal \cite{Gao.2013}. In order to use the CEGIS framework, we raise following assumption.
	\begin{assumption}\label{assum1}
		System $S$ has compact state-space $X\subset\mathbb{R}^n$ and partition sets $X_i \in L^{-1}(p_i)$, $i\in\{1,2,\ldots,N\}$ are bounded, semi-algebraic sets, \ie, they can be represented by polynomial equalities and inequalities.
	\end{assumption}
	\begin{remark}
		The assumption of compactness of state-space $X\subseteq\mathbb R^n$ can be supported by considering stopped process $\tilde{\xi}:\Omega\times \mathbb{R}_0^+\rightarrow X$ as
		\begin{equation*} 
		\tilde{\xi}^\mu(t):= 
		\begin{cases}  
		\xi^{\mu}(t),\quad \text{for} \ t<\tau,  \\
		\xi^{\mu}(\tau), \quad \text{for} \ t \geq \tau,
		\end{cases}
		\end{equation*}
		where $\tau$ is the first time of exit of the solution process $\xi$ of $S=(\mathbb{R}^n,M,\mathcal{M},F,G)$ from the open set Int($X$). Note that, in most cases, the generator corresponding to $\tilde\xi$ is identical to the one corresponding to $\xi$ over the set Int($X$), and is equal to zero outside of the set \cite{kushner}. Thus, the results in theorems \ref{barrier1} and \ref{barrier2} can be used for the systems with this assumption.
	\end{remark}
	
	The feasibility condition for the existence of common barrier certificate required in Theorem  \ref{barrier1} is provided in next lemma.
	
	\begin{lemma} \label{smt1}
		Consider a switched stochastic system $S=(X,M,\mathcal{M},F,G)$ with Assumption \ref{assum1}. Suppose sets $X_0$, $X_1$, and $X$ are bounded semi-algebraic sets. Suppose there exists a function $B(x)$, constants $\gamma \in [0, 1]$ and $c \geq 0$, such that the following expression is true
		\begin{align}\label{feas1}
		&\bigwedge_{x \in X} B(x) \geq 0 \bigwedge_{x \in X_0} B(x) \leq \gamma \bigwedge_{x \in X_1} B(x) \geq 1    \nonumber \\ 
		&\bigwedge_{m\in M}\Big(\bigwedge_{x \in X}\frac{\partial B}{\partial x}(x)f_m(x)  +\frac{1}{2}\text{Tr}\big(g_m^T(x)\frac{\partial^2 B}{\partial x^2}(x)g_m(x)\big) \leq c\Big).
		\end{align}
		Then $B(x)$ satisfies conditions in Theorem \ref{barrier1}.
	\end{lemma}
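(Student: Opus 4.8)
The plan is to verify that the feasibility expression \eqref{feas1} is nothing more than an encoding, restricted to the compact set $X$, of the hypotheses \eqref{eq:th1}--\eqref{eq:th3} of Theorem~\ref{barrier1}, so that the lemma follows by matching conjuncts and then invoking the stopped-process reduction permitted under Assumption~\ref{assum1}.

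First I would split \eqref{feas1} into its four clauses and pair each with a hypothesis of Theorem~\ref{barrier1}. The clause $\bigwedge_{x\in X}B(x)\geq 0$ makes $B$ non-negative on $X$, so $B$ maps into $\mathbb{R}_0^+$ as the theorem requires of its codomain. The clause $\bigwedge_{x\in X_0}B(x)\leq\gamma$ is exactly \eqref{eq:th1}, the clause $\bigwedge_{x\in X_1}B(x)\geq 1$ is exactly \eqref{eq:th2}, and the final clause, quantified over all $m\in M$ and all $x\in X$, reproduces the generator inequality \eqref{eq:th3} but with $\mathbb{R}^n$ replaced by $X$. Since $\gamma\in[0,1]$ and $c\geq 0$ are supplied directly, the numerical constants needed by the theorem are already in place.

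The only genuine gap is therefore that Theorem~\ref{barrier1} demands non-negativity and the generator bound on all of $\mathbb{R}^n$, whereas \eqref{feas1} imposes them only on the compact set $X$. To close this I would appeal to the stopped-process construction described in the remark following Assumption~\ref{assum1}: replace $\xi$ by the process $\tilde\xi$ stopped at the first exit time $\tau$ of $\xi$ from Int($X$). Its generator coincides with $\mathcal{D}$ on Int($X$) and vanishes on the complement; since $c\geq 0$, the inequality \eqref{eq:th3} then holds trivially outside Int($X$) and, by the last clause of \eqref{feas1}, holds inside it, while $B\geq 0$ on all states relevant to $\tilde\xi$. Hence every hypothesis of Theorem~\ref{barrier1} is met for $\tilde\xi$, and the theorem applies verbatim to yield the claimed bound.

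The main obstacle I expect is this last reduction rather than the bookkeeping of conjuncts: one must argue that the probability of reaching $X_1$ by $\tilde\xi$ over $[0,T)$ agrees with (indeed, dominates) that of $\xi$, and that the generator of the stopped process behaves as claimed both on Int($X$) and on its complement. Given the remark, this reduces to quoting the stopping-time property of the generator \cite{kushner} and observing that trajectories which have not yet left Int($X$) by time $T$ are unaffected by the stopping, so the upper bound transfers back to the original process $\xi$.
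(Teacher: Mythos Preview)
Your proposal is correct and essentially fills in what the paper leaves implicit: the paper states Lemma~\ref{smt1} without proof, relying on the remark immediately preceding it (the stopped-process construction under Assumption~\ref{assum1}) to justify restricting the hypotheses \eqref{eq:th1}--\eqref{eq:th3} from $\mathbb{R}^n$ to the compact state space $X$. Your conjunct-by-conjunct matching together with the stopped-process reduction is exactly that argument made explicit.
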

	One can easily obtain an analogous feasibility condition for the existence of multiple barrier certificates required in Theorem \ref{barrier2}.
	
	In order to utilize CEGIS framework, we consider a barrier certificate of the parametric form $B(a,x)= \sum_{i=1}^{k}a_i b_i(x)$ with some user-defined (nonlinear) basis functions $b_i(x)$ and unknown coefficients $a_i \in \mathbb{R}, i \in \{1, 2,\ldots, k\}$. With this choice of barrier certificate the feasibility expression \eqref{feas1} can be rewritten as
	\begin{align*}
	&\psi(a,x)\hspace{-0.2em}:=\hspace{-0.2em}\bigwedge_{x \in X} \hspace{-0.2em}B(a,x)\hspace{-0.2em} \geq \hspace{-0.2em}0 \bigwedge_{x \in X_0}\hspace{-0.2em} B(a,x)\hspace{-0.2em} \leq \hspace{-0.2em}\gamma \bigwedge_{x \in X_1}\hspace{-0.2em} B(a,x)\hspace{-0.2em} \geq \hspace{-0.2em}1    \nonumber \\ 
	&\bigwedge_{m\in M}\hspace{-0.4em}\Big(\hspace{-0.2em}\bigwedge_{x \in X}\hspace{-0.4em}\frac{\partial B}{\partial x}(a,x)f_m(x)  \hspace{-0.2em}+\hspace{-0.2em}\frac{1}{2}\text{Tr}\big(g_m^T(x)\frac{\partial^2 B}{\partial x^2}(a,x)g_m(x)\big) \hspace{-0.2em}\leq\hspace{-0.2em} c\Big).
	\end{align*}
    In a similar way, one can obtain a feasibility expression $\psi(a,x)$ for multiple barrier certificates.
	The coefficients $a_i$ can be efficiently found using SMT solvers such as Z3 for the finite set $\overline{X}\subset X$ of data samples. We denote the obtained candidate barrier certificate with fixed coefficients $a_i$ by $B(a,x)|_a$ and the corresponding feasibility expression by $\psi(a,x)|_a$. Next we obtain counterexample $x\in X$ such that $B(a,x)|_a$ satisfies $\neg \psi(a,x)|_a$. If $\neg \psi(a,x)|_a$ has no feasible solution, then the obtained $B(a,x)|_a$ is a true barrier certificate. If $\neg \psi(a,x)|_a$ is feasible, we update data samples as $\overline{X}=\overline{X}\cup x$ and recompute coefficients $a_i$ iteratively until $\neg \psi(a,x)|_a$ becomes infeasible. For detailed overview on CEGIS procedure to compute such barrier certificates we refer interested readers to \cite{jagtap2019formal}. To obtain a tight upper bound on the probability, one can utilize bisection method over $c$ and $\gamma$ iteratively. 
	\begin{remark}
		In addition, under the assumption that $f_m$ and $g_m$, $m\in M$ are polynomial functions of $\xi$, the conditions in theorems \ref{barrier1} and \ref{barrier2} can be formulated as a sum-of-square program to compute polynomial type barrier certificate similar to the one used in \cite{4287147}. 
	\end{remark}
	\section{Example}
	For the Example \ref{example},the obtained minimal values of $c$ and $\gamma$ for each of the elements of $\mathcal{P}^p(\textbf{q})$ and their computed upper bounds $\gamma+c T$ based on SMT solver Z3 and CEGIS approach are listed in Table \ref{tab}. Now, using Theorem \ref{upper_bound} we find that
	the lower bound on the probabilities that $\xi$ starts at any $x_0 \in L^{-1}(p)$, $p\in\Pi$ satisfying safe-LTL$_{F\backslash\medcirc}$ property \eqref{LTL_f} over time horizon $T=10$ are 
	\begin{align*}
	\mathbb{P}_{x_0}\{\sigma_{\xi} \models \varphi\} &\geq 0.99788\quad  \forall x_0 \in L^{-1}(p_0);\\ \mathbb{P}_{x_0}\{\sigma_{\xi} \models \varphi\} &\geq 0.96563\quad  \forall x_0 \in L^{-1}(p_1); \text{ and } \\
	\mathbb{P}_{x_0}\{\sigma_{\xi} \models \varphi\} &\geq 0\quad  \forall x_0 \in L^{-1}(p_1)\text{ and } \forall x_0 \in L^{-1}(p_3).
	\end{align*}
	For this computation, we used polynomial barrier certificates of order 5 each with 21 coefficients for all $\nu$. Each individual computation takes an average of 3 hours using an Intel i7-7700 processor with a 16GB RAM.
	
	\begin{table}[t]
		\caption{Values of $c$ and $\gamma$ for all $\nu\in \mathcal{P}^p(\bf{q})$, $\bf{q}\in \mathcal{R}^p$}
		\label{tab}
		\begin{tabular}{@{}llll@{}}
			\toprule
			$\nu$ & $c$            & $\gamma$ & $\gamma+cT$ \\ \midrule
			$(q_0,q_1,q_2)$    & $1.953125 \times 10^{-4}$ & $9.765 \times 10^{-5}$    & $0.002050$                 \\
			$(q_1,q_2,q_3)$    & $0.25$         & $0.25$                    & $1$                        \\
			$(q_0,q_1,q_4)$    & $1.853125 \times 10^{-4}$ & $1.853125\times 10^{-4}$            & $0.002038$                \\
			$(q_1,q_4,q_3)$    & $1.953125\times 10^{-4}$ & $9.765 \times 10^{-5}$    & $0.002050$                 \\
			$(q_0,q_4,q_3)$    & $0.003125$     & $0.003125$                & $0.003437$ \\     \bottomrule        
		\end{tabular}
	\end{table}
	\addtolength{\textheight}{-12cm}   
	


	%
	%
	%
	%
	%
	%
	%
	
	
	\bibliographystyle{IEEEtran}
	
	\bibliography{bibliography.bib}

\end{document}